\newcommand{\R}{{\mathbb R}}
\newcommand{\EE}{{\mathbb E}}
\newcommand{\de}{\mathrm{d}}
\newtheorem{thm}{Theorem}
\newtheorem{algo}{Algorithm}
\newenvironment{proof}{\noindent{\bf Proof:} }{\hfill $\square$ \\}
\begin{document}
\pagestyle{myheadings}
\thispagestyle{empty}
\setcounter{page}{1}

\begin{center}
{\Large\bf
Non-parametric adaptive bandwidth selection for kernel estimators 
of spatial intensity functions}\\[0.5in]
{\large M.N.M.\ van Lieshout}\\[0.1in]
CWI \\
P.O. Box 94079, NL-1090 GB Amsterdam, The Netherlands \\[0.1in]
Department of Applied Mathematics,
University of Twente \\
P.O. Box 217, NL-7500 AE Enschede, The Netherlands\\[0.5in]
\end{center}

\begin{verse}
{\footnotesize
\noindent
{\em Abstract:} We propose a new fully non-parametric two-step 
adaptive bandwidth selection method for kernel estimators of 
spatial point process intensity functions based on the 
Campbell--Mecke formula and Abramson's square root law. We 
present a simulation study to assess its performance relative 
to the Cronie--Van Lieshout global bandwidth selector and 
apply the technique to data on induced earthquakes in the
Groningen gas field. \\[0.2in]
\noindent
{\em AMS Mathematics Subject Classification (2010):}
60G55; 60D05; 62M30.\\
\noindent
{\em Key words \& Phrases:} 
adaptive kernel estimation;
bandwidth selection;
Camp\-bell--Mecke formula;
induced earthquakes;
intensity function;
point process.
}
\end{verse}

\section{Introduction}

The first step in any analysis of a spatial point pattern is usually 
estimating its intensity function \citep{Digg14,Illi08,Lies19}. To do 
so, various techniques exist. Perhaps the oldest is quadrat counting 
\citep{DuRi29} in which one simply reports the number of points falling 
in each quadrat scaled by the quadrat volume. Instead of fixed quadrats, 
one might use the cells in a tessellation formed by the pattern itself 
as the units for counting \citep{BarrScho10,Ord78,SchaWeyg00}. However, the 
most popular technique by far seems to be kernel estimation.

The classic kernel estimator for the spatial intensity function \cite{Digg85}
uses a constant bandwidth. However, intuitively, such a `one size fits all' 
approach would tend to over-smooth in dense areas, whilst not smoothing 
enough in sparse regions. As a consequence, finer detail in areas with 
many points may be lost, whereas the few points in sparser areas might 
give rise to spurious hot spots. 
%Moreover, as noted above, asymptotically optimal bandwidths are location dependent.
Motivated by similar considerations for density estimation of one-dimensional 
random variables, \citet{Abra82a} proposed to use a kernel smoother in which 
the bandwidth at each observation is weighted by a power of the density at 
that observation. Doing so reduces the bias significantly \citep{Halletal95},
at least asymptotically. 
%expansion in which the leading term is fourth order rather than

In kernel estimation, the crucial parameter is the bandwidth. It is often
chosen by visual inspection or using a rule of thumb (see e.g.\ 
\citet[Section~6.5]{Badd16}, \citet[Section~3.3]{Illi08} or 
\citet[Section~6]{Scott92}). Obviously, though, such procedures are 
rather ad-hoc and subjective. 

A different class of techniques is based on asymptotics. For instance for 
classic kernel estimators, \citet{BrooMarr91} considered a Poisson point 
process on the real line and assumed a simple multiplicative model for the 
intensity function to derive an asymptotically optimal least-squares 
cross-validation estimator when the number of points tends to infinity. 
\citet{Lo17} picked up the baton and studied the asymptotic (integrated) 
mean squared error in any dimension without imposing a specific intensity 
model, again in the regime that the number of points goes to infinity.
\citet{Lies20} generalised Lo's work to point processes that may exhibit
interaction between the points under the assumption that replicated patterns
are available so that infill asymptotics apply. \citet{DaviFlynHaze18} 
considered asymptotic expansions for a spatial analogue of the Abramson 
estimator for Poisson processes as the number of points increases; 
\citet{Lies21} studied infill asymptotics that allow for interaction between 
the points. It is important to note that the resulting optimal bandwidths
depend on the unknown intensity function and cannot be computed in practice 
without resorting to iterative techniques.

Less subjective yet practical procedures to select a suitable global bandwidth 
rely on a specific model. For example likelihood cross-validation 
\citep[Section 5.3]{Load99} assumes the data come from a Poisson point process. 
Another common approach is to minimise the mean squared error in state estimation 
for a planar stationary isotropic Cox process \citep{Digg85}. The disadvantage
of such techniques is that the underlying assumption may not hold for the 
pattern at hand, which motivated \citet{CronLies18} to propose a fully 
non-parametric technique. For adaptive bandwidth selection, to the best of our
knowledge, similar procedures do not exist. In this article, we extend the
Cronie--Van Lieshout approach to adaptive bandwidth selection and propose a 
new fully non-parametric, easy to implement, two-step adaptive bandwidth 
selection method based on the Campbell--Mecke formula that does not require
numerical approximation of integrals nor knowledge of second or higher moments.

The plan of this paper is as follows. Section~\ref{S:prelim} recalls crucial 
concepts and fixes notation. In 
Section~\ref{S:adaptive}, we discuss adaptive kernel estimators and present
the algorithm for selecting the bandwidth. The results of a simulation 
study into the efficacy of the new approach are given in Section~\ref{S:simu};
an application to a data set concerning induced earthquakes is presented
in Section~\ref{S:Groningen}. The paper closes with a discussion on 
computational complexity and ideas for future research.

\section{Preliminaries and notation}
\label{S:prelim}

First, let us introduce some notation. Let $\Psi$ be a simple point process 
\citep{SKM} in $d$-dimensional Euclidean space $\R^d$ that is observed in a bounded,
non-empty and open subset $W$ of $\R^d$. We assume that the first order moment 
measure $\Lambda$ of $\Psi$ defined by
\[
\Lambda(A) = \EE \left[ \sum_{x\in \Psi} 1\{ x \in A \} \right],
\]
the expected number of points of $\Psi$ that fall in Borel subsets $A$ of $\R^d$, 
exists as a locally finite Borel measure and is absolutely continuous with respect 
to $d$-dimensional Lebesgue measure $\ell$ with a Radon--Nikodym derivative $\lambda: 
\R^d \to [0,\infty)$. We will refer to the function $\lambda$ as the {\em intensity 
function\/} of $\Psi$. 

The {\em kernel estimator\/} of the intensity function of a point process was 
introduced by \citet{Digg85} as
\begin{equation}
\label{e:kernel}
\widehat \lambda(x_0; h, \Psi, W) = \frac{1}{h^d} \sum_{y\in\Psi\cap W} 
\kappa\left( \frac{x_0 - y}{h}\right),
\quad x_0 \in W,
\end{equation}
possibly divided by a global edge correction factor 
\[
w(x_0, h, W) = 
  \frac{1}{h^d} \int_W \kappa\left( \frac{x_0 - z}{h}\right) dz.
\]
An alternative, local, edge correction can be found in \citet{Lies12}. 
The function $\kappa: \R^d \rightarrow [0,\infty)$ is supposed to be a kernel, 
that is, a $d$-dimensional probability density function \citep[p.~13]{Silv86} 
that is even in all its arguments. When $\kappa$ is positive in a 
neighbourhood of the origin, since $W$ is assumed to be open, the global 
edge correction factor is non-zero for all $x_0 \in W$.

The crucial parameter in (\ref{e:kernel}) is the {\em bandwidth\/} $h>0$, which 
determines the amount of smoothing. For large $h$, the mass of $\kappa$ is spread 
far and wide, which reduces the variance but may lead to a large bias. For small 
$h$, the mass of $\kappa$ is concentrated around the observed points of $\Psi \cap W$. 
Thus, the bias is reduced at the price of a larger variance. 

Popular choices of kernel include those belonging to the Beta class \citep{Hall04}
\begin{equation}
\label{e:beta}
\kappa^\gamma(x) = 
\frac{\Gamma \left(d/2 + \gamma + 1\right)}{
\pi^{d/2} \Gamma \left(\gamma+1\right)}
(1 - x^T x)^{\gamma} \, 1\{ x \in B(0, 1) \},
\quad x\in\R^d,
\end{equation}
for $\gamma \geq 0$. Here $B(0,1)$ is the closed unit ball in $\R^d$ centred 
at the origin.  Note that Beta kernels are supported on the compact unit ball 
and that their smoothness is governed by the parameter $\gamma$. Indeed, the box 
kernel defined by $\gamma=0$ is constant and therefore continuous on the interior 
of the unit ball; the Epanechnikov kernel corresponding to the choice $\gamma=1$ 
is Lipschitz continuous. For $\gamma > k$ the function $\kappa^\gamma$ is $k$ times 
continuously differentiable on $\R^d$. An alternative with unbounded support is 
the Gaussian kernel
\begin{equation}
\label{e:gauss}
\kappa(x) = (2\pi)^{-d/2} \exp\left( - x^Tx / 2 \right),
  \quad x\in\R^d.
\end{equation}

Current bandwidth selection techniques are either based on asymptotic expansions
\citep{Lies20} or specific model assumptions \citep{Badd16,BermDigg89,Load99}
In a recent paper, \citet{CronLies18} proposed a non-parametric alternative 
based on the Campbell--Mecke formule \citep[p.~130]{SKM} applied to the function 
$f: \R^d \to \R^+$,  $f(x) = 1\{ x \in W \} / \lambda(x)$ known as the 
Stoyan--Grabarnik statistic \citep{StoyGrab91}, which is measurable if 
$\lambda(x) > 0$ for all $x \in W$. Indeed
\begin{equation}
\label{HamiltonPrinciple}
\EE\left\{ \sum_{x\in\Psi\cap W} \frac{1}{\lambda(x)}  \right\} 
=
\int_{W}\frac{1}{\lambda(x)} \lambda(x) \, \de x
=
\ell(W).
\end{equation}
To select a bandwidth, one may simply replace $\lambda$ by an estimator 
$\widehat \lambda(\cdot; h, \Psi, W)$ in the left hand side of the equation and 
minimise the discrepancy between $\ell(W)$ and the sum of the $\widehat\lambda(x; 
h, \Psi, W)^{-1}$ over points in $\Psi\cap W$. Formally, set
\[
\label{e:HamFun}
T_{\kappa}(h;\Psi, W) = \left\{ \begin{array}{ll}
  \displaystyle
  \sum_{x\in\Psi\cap W} \frac{1}{\widehat\lambda(x;h, \Psi, W)},
& \Psi\cap W \neq \emptyset, \\
\ell(W), & \text{ otherwise},
\end{array} \right.
\]
and choose bandwidth $h>0$ by minimising 
\begin{equation}
\label{e:DefHam}
F_{\kappa}(h;\Psi, W, \ell(W)) =
\left| T_{\kappa}(h;\Psi, W) -\ell(W) \right|
\end{equation}
Since $W$ is assumed to be open and $\kappa(0) > 0$ for the 
kernels considered in this paper, $T_\kappa$ and therefore
\eqref{e:DefHam} is well-defined with or without edge correction.
Moreover, without edge correction, a zero point of the 
equation (\ref{e:DefHam}) exists.

Our goal in the next section is to extend the ideas outlined above
to adaptive bandwidths.

\section{An adaptive bandwidth selection algorithm}
\label{S:adaptive}

For patterns that contain dense as well as sparse regions, a global
`one size fits all' approach to bandwidth selection may not be suitable. 
Indeed, by definition, it leads to a compromise choice that may 
be too large for regions that contain many points and too small for 
regions with few points. The resulting estimator therefore tends to 
oversmooth and miss fine details in denser regions and contain spurious 
bumps in the sparser regions. To overcome such problems, in the context 
of random variables, \citet{Abra82a} proposed to scale the bandwidth 
in proportion to a power of the intensity function. In the point 
pattern setting, a similar adaptive kernel estimator 
\citep{DaviFlynHaze18,Lies21} is defined as
\begin{equation} 
\label{e:Abramson}
\widehat\lambda_A( x_0; h, \Psi, W )  =
  \sum_{y\in \Psi\cap W}  \frac{1}{c(y)^d h^d} \kappa\left(
    \frac{x_0-y}{h \, c(y) } \right) w(y, h, W)^{-1}
\end{equation}
where
\begin{equation}
\label{e:Abra-c}
c(y) = \left( 
  \frac{\lambda(y)}{ \prod_{z\in\Psi\cap W} \lambda(z)^{1/N(\Psi\cap W)} }
\right)^{\alpha},
\end{equation}
$N(\Psi\cap W$ denotes the number of points of $\Psi$ that fall in $W$
and $w(y,h, W)$ is an edge correction weight. The power $\alpha$ is set to 
$-1/2$ when considering asymptotic expansions \citep{Abra82a,Lies21}. 
In practice, other powers, e.g.\ $\alpha = -1/d$ when $d\geq 2$, may 
perform as well.

Let us make a few observations. First, note that points $y$ located in 
regions with a low intensity are given a larger bandwidth $h \, c(y)$ 
than those in high intensity regions, as desired. Secondly, we must assume 
that $\lambda(y) > 0$ for each $y\in \Psi \cap W$. The normalisation by the 
geometric mean is used to obtain a dimensionless quantity for the bandwidth. 
When focussing on a single point $x_0$ \citep{Abra82a,Lies21}, one could 
normalise simply by $\lambda(x_0)$. Finally, classic edge correction 
ideas apply. For example, a local edge correction weight factor in this 
context takes the form 
\[
w(y, h, W) =  \frac{1 }{ c(y)^d  h^d} \int_W
 \kappa\left(  \frac{z-y}{ h \, c(y) }  \right) dz
\]
and is mass preserving:
\[
\int_W \widehat\lambda_A( z; h, \Psi, W ) \, dz = N(\Psi \cap W).
\]

Since the local bandwidth $h \, c(y)$ depends on the unknown intensity 
function, (\ref{e:Abramson}) cannot be calculated. A common solution 
is to estimate $c(y)$ by plugging-in a {\em pilot intensity estimator\/}
\citep{ChacDuon18,Silv86,WandJone94}. For example, one could estimate 
$\lambda(y)$ by a global bandwidth kernel estimator of the form
(\ref{e:kernel}) and set 
\[
\widehat c(y) = \left( 
  \frac{\widehat \lambda(y)}
    { \prod_{z\in\Psi\cap W} \widehat \lambda(z)^{1/N(\Psi\cap W)} }
\right)^{\alpha}.
\]
We propose to use a similar two-step approach to select an adaptive 
bandwidth. More specifically, first use the Cronie \& Van Lieshout
technique to select a global bandwidth for the pilot intensity 
estimator and plug it into (\ref{e:Abra-c}) to obtain $\widehat c(y)$. 
Then apply (\ref{e:DefHam}) to $\widehat \lambda_A$ with local bandwidths
$h \, \widehat c(y)$ and optimise over $h$. 

More formally, assume that $\Psi \cap W \neq \emptyset$ and let $\kappa$
be some kernel. Then the adaptive bandwidth selection algorithm reads as 
follows.

\begin{algo}
\label{A:localCvL}
\mbox{}

\begin{description}
\item[1.a] Choose a global bandwidth $h_g$ by minimising
\[
\left| \sum_{x \in \Psi \cap W} \frac{1}{ \widehat \lambda(x; h, \Psi, W) } - \ell(W) \right|
\]
over $h>0$ where, for $x_0 \in W$,
\[
\widehat \lambda(x_0; h, \Psi, W) = \frac{1}{h^d} \sum_{y \in \Psi \cap W} 
     \kappa\left( \frac{ x_0 - y }{ h } \right).
\]
\item[1.b] Calculate a pilot estimator
\[
\widehat \lambda_g(x; h_g, \Psi, W) = \frac{1}{h_g^d} \sum_{y \in \Psi \cap W} 
   \kappa\left( \frac{ x - y }{ h_g } \right) w^{-1}(y, h_g, W) 
\]
for each $x\in\Psi$, with local edge correction 
\[
w(y, h_g, W) = \frac{1}{h_g^d} \int_W \kappa\left( \frac{ z - y }{ h_g } \right) dz.
\]
\item[2.a] Choose an adaptive bandwidth $h_a$ by minimising
\[
\left| \sum_{x \in \Psi \cap W} \frac{1}{ \widehat \lambda_A(x; h, \Psi, W) } - \ell(W) \right|
\]
over $h>0$ where, for $x_0 \in W$,
\[
\widehat \lambda_A(x_0; h, \Psi, W)  = 
\frac{1}{h^d} \sum_{y \in \Psi \cap W} \frac{1}{\widehat c(y; h_g, \Psi, W)^d}
     \kappa\left( \frac{ x_0 - y }{ h \, \widehat c(y; h_g, \Psi, W)  } \right)
\]
with
\[
\widehat c(y; h_g, \Psi, W) =  \left( 
  \frac{\widehat \lambda_g(y; h_g, \Psi, W)}
       { \prod_{z\in\Psi\cap W} \widehat \lambda_g(z; h_g, \Psi, W)^{1/N(\Psi\cap W)} }
\right)^{-1/2}.
\]
\item[2.b] Apply local edge correction, approximating when necessary,
to calculate the final estimator.
\end{description}
\end{algo}

In selecting the bandwidth, no edge correction is applied as the 
clearest optimum is obtained that way \citep{CronLies18}. Note that to 
ensure that one never divides by zero, the intensity function estimates
must be positive for all $x \in \Psi \cap W$. A sufficient condition is that 
$\kappa(0) > 0$.

Next, we consider the continuity properties of $T_\kappa(\cdot; \Psi, W)$ 
and its limits as the bandwidth approaches zero and infinity. For the global 
case, \citep[Thm~1]{CronLies18} guarantees the validity of the first step in 
the above algorithm. For step {\bf [2.a]} the following theorem holds.

\begin{thm} 
\label{t:continuous}
Let $\psi$ be a locally finite point pattern of distinct points in $\R^d$ 
observed in some non-empty open and bounded window $W$ such that 
$\psi \cap W \neq \emptyset$. Let $\kappa$ be a Gaussian kernel or a 
Beta kernel with $\gamma > 0$, and $w \equiv 1$. Write $\widehat \lambda_A$ 
for the Abramson estimator (\ref{e:Abramson}) with 
\[
c(y; \psi, W) = \left( 
   \frac{1}{\lambda_p(y)} \prod_{z\in \psi \cap W} \lambda_p(z)^{1/N(\psi\cap W)}
\right)^{1/2}
\]
for some pilot estimates $\lambda_p(y)$ that are strictly positive for all 
$y \in \psi \cap W$. Then the criterion function 
\[
T_\kappa(h; \psi, W) = \sum_{x\in\psi\cap W} 
  \frac{1}{ \widehat \lambda_{A}(x; h, \psi, W)}
\]
is a continuous function of $h$ on $(0,\infty)$. For the box kernel it is piecewise 
continuous. In all cases, 
\[
\lim_{h\to 0} T_\kappa(h; \psi, W) = 0; \quad 
\lim_{h\to\infty} T_\kappa(h; \psi, W) = \infty.
\]
\end{thm}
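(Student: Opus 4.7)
The plan is to exploit the fact that $c(y; \psi, W)$ depends only on the fixed pilot estimates $\lambda_p$, hence is a strictly positive constant in $h$ for each $y \in \psi \cap W$. Combined with the assumption $\psi \cap W \neq \emptyset$, the theorem reduces to elementary analysis of finitely many scalar functions of $h$.

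For continuity, I would first isolate the self-term $y = x$, which contributes $\kappa(0)/(c(x)^d h^d)$; since $\kappa(0) > 0$ for the Gaussian kernel and for every Beta kernel with $\gamma \geq 0$, and the remaining summands are nonnegative, $\widehat\lambda_A(x; h, \psi, W)$ is strictly positive for every $x \in \psi \cap W$ and every $h > 0$. When $\kappa$ is Gaussian or Beta with $\gamma > 0$, $\kappa$ is continuous on $\R^d$, so each summand is continuous in $h$ on $(0, \infty)$; taking reciprocals of a strictly positive continuous function and then summing finitely many of them yields continuity of $T_\kappa(\cdot; \psi, W)$. For the box kernel, each off-diagonal summand $\kappa((x - y)/(h c(y)))$ is a two-valued step function in $h$ with a single jump at $h = |x - y| / c(y)$; since $\psi \cap W$ is finite there are only finitely many such jump points, and on each open interval between them the same positivity-and-continuity argument gives piecewise continuity.

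For the boundary behaviour, the lower bound $\widehat\lambda_A(x; h, \psi, W) \geq \kappa(0)/(c(x)^d h^d)$ forces $\widehat\lambda_A \to \infty$ as $h \to 0$, so every reciprocal in $T_\kappa$ vanishes and the limit is $0$. Conversely, since $\kappa \leq \kappa(0)$ pointwise for all three kernel families,
\[
\widehat\lambda_A(x; h, \psi, W) \leq \frac{\kappa(0)}{h^d} \sum_{y \in \psi \cap W} \frac{1}{c(y)^d} \longrightarrow 0
\]
as $h \to \infty$, so every reciprocal diverges, and using $\psi \cap W \neq \emptyset$ one obtains $T_\kappa(h; \psi, W) \to \infty$.

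The only mildly delicate point I anticipate is the box-kernel bookkeeping: I need to enumerate the finitely many jump locations $|x - y|/c(y)$ (some of which may coincide), and to check that on each complementary open interval $\widehat\lambda_A(x;h,\psi,W)$ is simply a positive linear combination of terms of the form $1/h^d$, so that the positivity and continuity arguments carry over verbatim. Everything else is routine use of continuity of sums and reciprocals of strictly positive functions.
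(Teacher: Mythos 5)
Your proposal is correct and follows essentially the same route as the paper: the self-term lower bound $\widehat\lambda_A(x;h,\psi,W)\geq \kappa(0)c(x)^{-d}h^{-d}$ gives positivity and the $h\to 0$ limit, the pointwise bound $\kappa\leq\kappa(0)$ gives the $h\to\infty$ limit, and continuity (or, for the box kernel, piecewise continuity with jumps at $h=\|x-y\|/c(y)$) follows from finiteness of $\psi\cap W$ and continuity of sums and reciprocals of strictly positive continuous functions. No gaps.
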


\begin{proof}
We will first look at the limit as $h\to 0$. Note that for all $h>0$ 
and $x\in \psi \cap W$,
\[
\widehat \lambda_A(x; h, \psi, W) \geq \kappa(0) \, c(x; \psi, W)^{-d} h^{-d}
 > 0.
\]
Here we use that since the pilot estimator $\lambda_p$ is strictly positive
on the non-empty pattern $\psi \cap W$, so is $c(\cdot)$. Also, for all kernels 
considered, $\kappa(0) > 0$.  Consequently,
\[
T_\kappa(h; \psi, W)  =  \sum_{x\in \psi \cap W} 
\frac{1}{\widehat \lambda_A(x; h, \psi, W)}
 \leq  
\sum_{x\in \psi \cap W}
\frac{c(x; \psi, W)^d h^d}{ \kappa(0)}.
\]
The right-most expression and therefore $T_\kappa(h; \psi, W)$ tends to $0$.

\medskip

Next let $h\to \infty$. For the box, Beta and Gaussian kernels, $\kappa( \cdot ) 
\leq \kappa(0)$. We already observed that $c(y; \psi, W)$ is strictly positive 
for $y\in \psi\cap W$ since by assumption $\lambda_p$ is. Moreover, it does not 
depend on $h$ and therefore
\begin{eqnarray*}
T_\kappa(h; \psi, W) & = & \sum_{x\in \psi \cap W} \frac{ h^d}{
\sum_{y\in \psi \cap W} c(y; \psi, W)^{-d} 
\kappa\left( \frac{ x - y }{ h \, c(y; \psi, W) } \right)} \\
& \geq & h^d 
\sum_{x\in \psi \cap W} \frac{1}{
\sum_{y\in \psi \cap W} c(y; \psi, W)^{-d}  \kappa(0) }.
\end{eqnarray*}
The right-most expression and therefore $T_\kappa(h; \psi, W)$
tends to $\infty$ when $\psi \cap W$ is non-empty.

\medskip

It remains to look at continuity properties. Both the Beta kernels
$\kappa^\gamma$ with $\gamma > 0$ and the Gaussian kernel are 
continuous on $\R^d$. The box kernel is discontinuous on the unit 
disc $\partial B(0,1)$ only. The function $h\to h^{-d}$ is continuous on 
$(0,\infty)$. Therefore, for fixed $z\in\R^d$, the function
\(
h \to \kappa( z /  h )
\)
is also continuous when $\kappa$ is a Gaussian kernel or a Beta
kernel with $\gamma >0$. For the box kernel, this function is 
piecewise continuous, having a discontinuity at $h=||z||$. 
Observe that, since $\psi\cap W$ is non-empty by assumption,
$\kappa(0) > 0$ and the pilot estimates $\lambda_p(x)$ are
strictly positive for every $x\in\psi\cap W$, also the
$\widehat \lambda_A(x; h, \psi, W)$ are strictly positive for 
$x\in\psi\cap W$ and $h>0$. We conclude that, as a function 
of $h$ on $(0,\infty)$, $T_\kappa(h; \psi, W)$ is continuous
for Gaussian kernels and Beta kernels with $\gamma > 1$, 
piecewise continous for the box kernel.
\end{proof}

Theorem~\ref{t:continuous} implies that Step {\bf{[2.a]}} in 
Algorithm~\ref{A:localCvL} is solvable for $h_a$. Usually, but 
not always, the solution is unique. In case of multiple solutions,
one may pick the smallest.

\section{Numerical evaluations}
\label{S:numerics}

In this section we investigate the performance of the proposed local
bandwidth selection algorithm for simulated and real-life data.

\subsection{Simulation study}
\label{S:simu}

\begin{table}[bht]
\begin{center}
\begin{tabular}{|l|l|l|}
\hline
constant & trend & high contrast feature \\
\hline
$\lambda_1(x,y) \equiv  50$
&
$\lambda_3(x,y)  =  5 + 225 \, x^4$  
&
$ \lambda_7(x,y) =  5 +  { 45 \times 50 } \times 1_S(x,y) / \pi$
\\
$\lambda_2(x,y) \equiv 250$
&
$\lambda_4(x,y) = 10 + 200 \, x^4$
&
$\lambda_8(x,y) = 10 +  { 40 \times 50 } \times 1_S(x,y) / \pi$
\\
& 
$\lambda_5(x,y) = 25 + 1125 \, x^4$
&
$\lambda_9(x,y) = 25 +  { 225 \times 50} \times 1_S(x,y) / \pi$ 
\\
& 
$\lambda_6(x,y) = 50 + 1000 \, x^4$
&
$\lambda_{10}(x,y) = 50 +  { 200 \times 50} \times 1_S(x,y) / \pi$\\
\hline
\end{tabular}
\end{center}
\caption{Intensity functions on the unit square.
Here $S = \{ (x,y) \in [0,1]^2 : (x - 0.5)^2 + (y-0.6)^2  < 1/100 
\mbox{ or } (x - 0.5)^2 + (y-0.4)^2  < 1/100  \}$.}
\label{T:intensities}
\end{table}

To compare the performance of the adaptive bandwidth selection 
approach with the global one, we conduct a simulation study.  
We consider three types of intensity functions on the unit square
in $\R^2$: a constant intensity, a gradual polynomial trend in the 
horizontal direction and a central high intensity region contrasting 
with a low intensity background. Specifically, set
\[
\lambda(x,y) = \left\{ \begin{array}{l}
 \lambda \\
 a + b x^4 \\
 a + b \, 1\{ (x,y) \in S \} 
\end{array} \right.
\]
for $S = \{ (x,y) \in [0,1]^2 : (x - 0.5)^2 + (y-0.6)^2  < 1/100 
\mbox{ or } (x - 0.5)^2 + (y-0.4)^2  < 1/100 \}$.
For each type of function, we set the parameters in such a way
that realisations contain approximately $50$ or $250$ points. For the
latter two function types, we also vary the fraction $a/b$. Doing so, 
we obtain the intensity functions summarised in Table~\ref{T:intensities}.

A convenient way to obtain realisations of point processes with spatially
varying intensity function $\lambda_i$ is to apply independent thinning 
to a realisations of a stationary point process whose intensity function
is known explicitly. Here we choose a Poisson process, a Mat\'ern cluster
process and a Mat\'ern hard core process \citep{Mate86}. We will need the 
notation $\bar \lambda_i = \sup_{(x,y) \in [0,1]^2} \lambda_i(x,y)$ for the 
maximal value of $\lambda_i$ in $[0,1]^2$.

\begin{table}[thb]
\begin{center}
\begin{tabular}{|l|rrrrr|}
\hline
$\lambda$ & Poisson & cluster $\nu=5$ & cluster $\nu=10$ & hard core $\nu=0.9$ & hard core $\nu=0.5$ \\
\hline
$\lambda_1$ & 10.22 & 23.17 & 27.96 & 10.40 & 7.73 \\
$\lambda_2$ & 31.76 & 63.77 & 82.43 & 28.37 & 24.93 \\
$\lambda_3$ & 21.99 & 33.64 & 41.34 & 21.05 & 20.16 \\
$\lambda_4$ & 16.98 & 30.51 & 43.12 & 16.31 & 15.00 \\
$\lambda_5$ & 50.57 & 71.92 & 102.96 & 48.62 & 47.40 \\
$\lambda_6$ & 39.93 & 72.06 & 99.85 & 39.69 & 35.13 \\
$\lambda_7$ & 562.61 & 565.66 & 569.71 & 561.88 & 561.03 \\
$\lambda_8$ & 434.81 & 437.03 & 441.03 & 433.15 & 433.48 \\
$\lambda_9$ & 2,805.35 & 2,801.78 & 2,804.41 & 2,801.64 & 2,800.81 \\
$\lambda_{10}$ & 2,164.57 & 2,165.48 & 2,176.19 & 2,174.04 & 2,143.22 \\
\hline
\end{tabular}
\end{center}
\caption{Mean integrated squared error relative to expected number of 
points of kernel estimates over $100$ simulations using a Gaussian kernel
with local edge correction and bandwidth chosen by the Cronie--Van Lieshout 
algorithm for different point process models having intensity functions 
$\lambda_i$, $i=1, \dots, 10$.}
\label{T:classic}
\end{table}

\paragraph{Poisson process} 
Let $X$ be a homogeneous Poisson process with intensity function
$\bar \lambda_i$. Then its independent thinning with retention probabilities 
$\lambda_i(x,y) / \bar \lambda_i$ is a heterogeneous Poisson process with 
intensity function $\lambda_i$.

\paragraph{Mat\'ern cluster process}
Let $X_p$ be a homogeneous Poisson process with intensity $\kappa$ 
on $[-0.05,$ $ 0.05]^2$. Assume that each `parent' point $z\in X_p$ generates
a Poisson number of `daughter' points, say with mean $\nu$ in the closed ball
$B(z, 0.05)$ of radius $0.05$ around $z$ and write $X$ for the union of
daughter points falling in $[0,1]^2$. Then $X$ is homogeneous and 
has constant intensity $\kappa \nu$ on $[0,1]^2$. We will consider two 
degrees of clustering:
\begin{itemize}
\item parent intensity $\kappa = \bar \lambda_i/5$, mean number of daughters 
$\nu = 5$ in a ball of radius $0.05$ around the parent;
\item parent intensity $\kappa = \bar \lambda_i/10$, mean number of daughters 
$\nu = 10$ in a ball of radius $0.05$ around the parent.
\end{itemize}
In either case, independent thinning with retention probabilities 
$\lambda_i(x,y) / \bar \lambda_i$ results in a point process $X$ having 
intensity function $\lambda_i$.

\paragraph{Type II Mat\'ern hard core process}
Let $X_g$ be a homogeneous Poisson process with intensity $\kappa$ 
on $[-r, r]^2$ and assign each `ground' point $z\in X_g$ a mark 
according to the uniform distribution on $(0,1)$ independently of 
other points. Keep a point $z\in X_g\cap [0,1]^2$ if no other point 
of $X_g$ with a larger mark lies within distance $r > 0$. The resulting 
point process $X$ is homogeneous and has constant intensity 
$(1 - e^{-\kappa \pi r^2} ) / ( \pi r^2 ) $ on $[0,1]^2$. 
We will consider two degrees of repulsion:
\begin{itemize}
\item ground intensity $\kappa = -10 \bar \lambda_i \log\nu$ with
$\nu = 0.9$ and hard core distance $r = (10 \pi \bar \lambda_i)^{-1/2}$; 
\item ground intensity $\kappa = -2 \bar \lambda_i \log\nu$ with 
$\nu = 0.5$ and hard core distance $r = (2 \pi \bar \lambda_i)^{-1/2}$.
\end{itemize}
In both cases, independent thinning with retention probabilities 
$\lambda_i(x,y) / \bar \lambda_i$ results in a point process $X$ having 
intensity function $\lambda_i$.

\begin{table}[thb]
\begin{center}
\begin{tabular}{|l|rrrrr|}
\hline
$\lambda$ & Poisson & cluster(5) & cluster(10) & hard core $\nu=0.9$ & hard core $\nu=0.5$ \\
\hline
$\lambda_1$ & 15.72 & 40.52 & 42.99 & 16.06 & 11.00 \\
$\lambda_2$ & 52.13 & 108.20 & 140.90 & 47.58 & 34.67 \\
$\lambda_3$ & 25.58 & 58.76 & 81.97 & 24.96 & 21.35 \\
$\lambda_4$ & 25.39 & 77.66 & 115.98 & 25.96 & 19.15 \\
$\lambda_5$ & 90.84 & 196.21 & 292.96 & 81.76 & 67.02\\
$\lambda_6$ & 77.80 & 188.12 & 289.79 & 81.04 & 61.98 \\
$\lambda_7$ & 555.42 & 554.96 & 560.86 & 555.32 & 555.95 \\
$\lambda_8$ & 401.12 & 403.04 & 421.07 & 396.93 & 406.10 \\
$\lambda_9$ & 2,663.56 & 2,586.62 & 2,545.90 & 2,606.78 & 2,535.12 \\
$\lambda_{10}$ & 1,731.39 & 1,828.45 & 1,799.93 & 1,717.17 & 1,808.46 \\
\hline
\end{tabular}
\end{center}
\caption{Mean integrated squared error relative to expected number
of points of kernel estimates over $100$ simulations using a Gaussian
kernel with local edge correction and bandwidth chosen by
Algorithm~\ref{A:localCvL} for different point process models having
intensity functions $\lambda_i$, $i=1, \dots, 10$.}
\label{T:Abramson}
\end{table}

The results of the simulation study are presented in Tables~\ref{T:classic}
and \ref{T:Abramson}. For each intensity function and each point process model, 
we generated $100$ simulations in the unit square and calculated the optimal 
global and adaptive bandwidths using a Gaussian kernel. The tabulated values 
are the mean integrated squared errors after local edge correction over the 
$100$ patterns scaled by the exptected number of points.  All calculations 
were done in the R-package {\tt spatstat} \citep{Badd16} to which we contributed
the function {\tt bw.CvL.adaptive}. 

Comparing Table~\ref{T:classic} to Table~\ref{T:Abramson}, for homogeneous
point processes (intensity functions $\lambda_1$ and $\lambda_2$) the mean
integrated squared error per point is smaller for a global bandwidth. This
is not surprising, as all regions are equally rich in points in expectation.
When the intensity function is increasing gradually (intensity functions 
$\lambda_i$ for $i=3, \dots, 6$) also global bandwidth selection outperforms 
adaptive bandwidth selection. The situation is reversed when the intensity 
function shows more distinct features (intensity functions $\lambda_i$ for $i=7, 
\dots, 10$). Then, for all point process models considered, local bandwidth 
selection results in a smaller mean integrated squared error relative to
the expected number of points.

\subsection{Illustration to pattern of induced earthquakes}
\label{S:Groningen}

\begin{figure}[hbt]
\epsfxsize=0.3\hsize
\epsffile{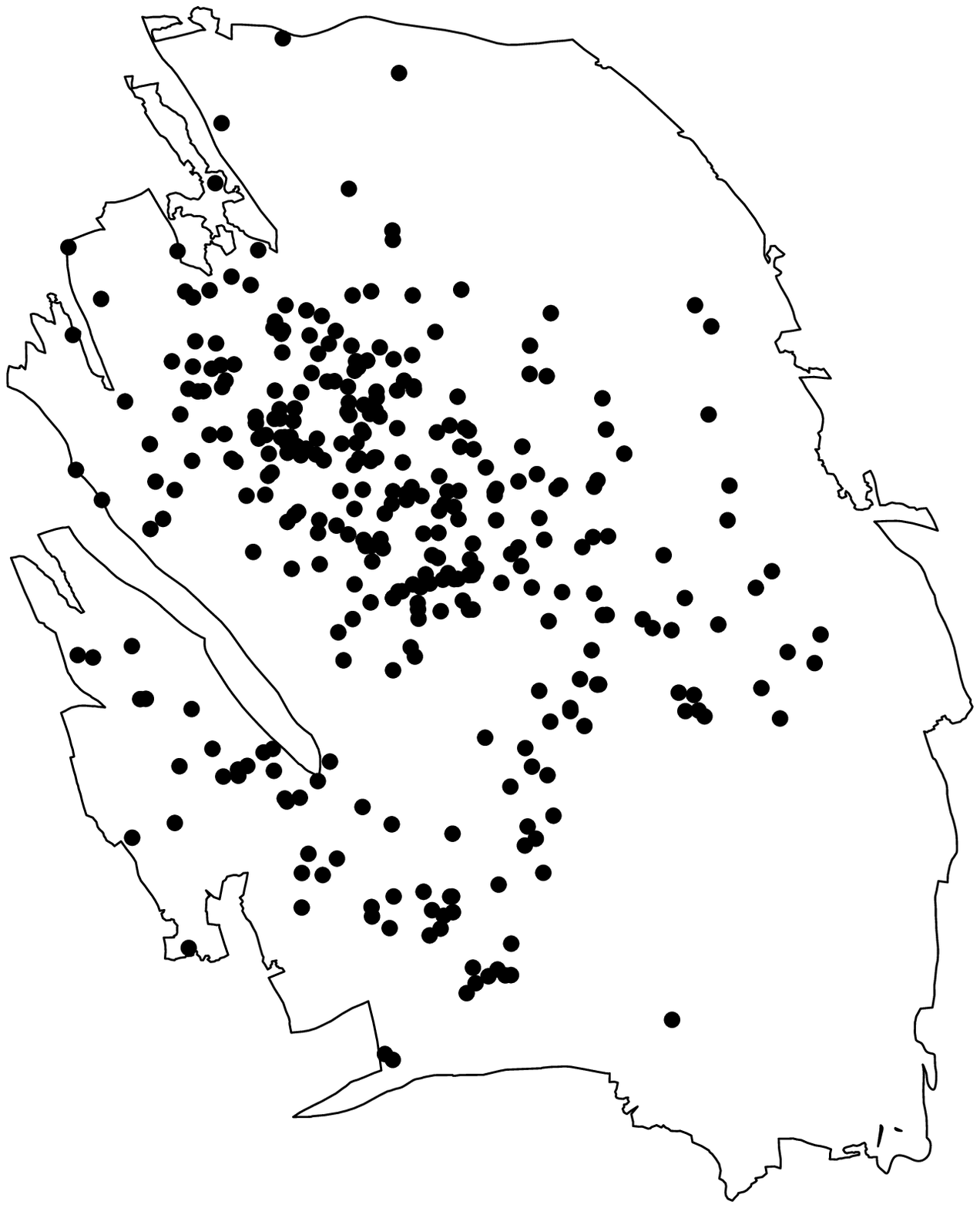}
\epsfxsize=0.3\hsize
\epsffile{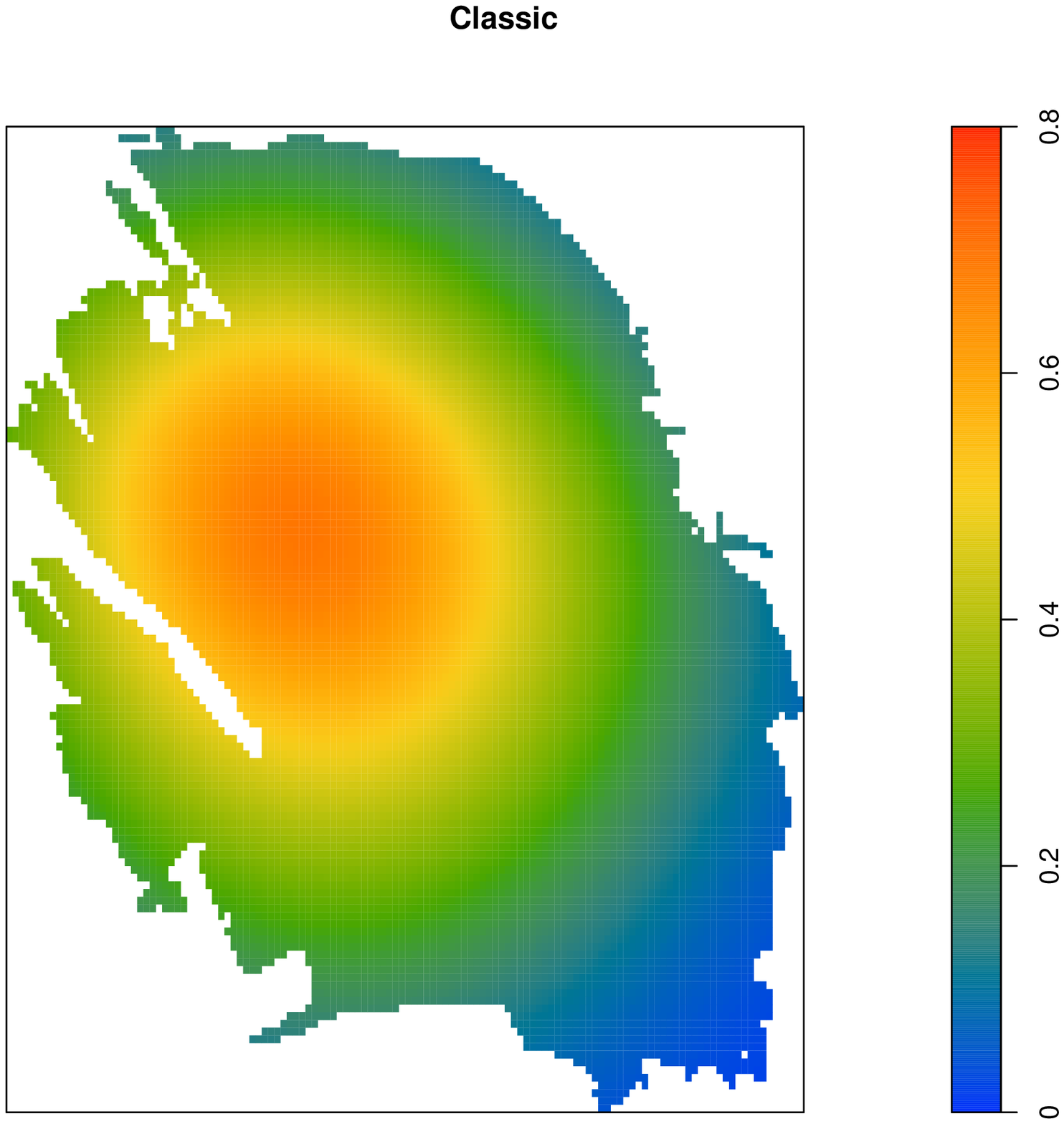}
\epsfxsize=0.3\hsize
\epsffile{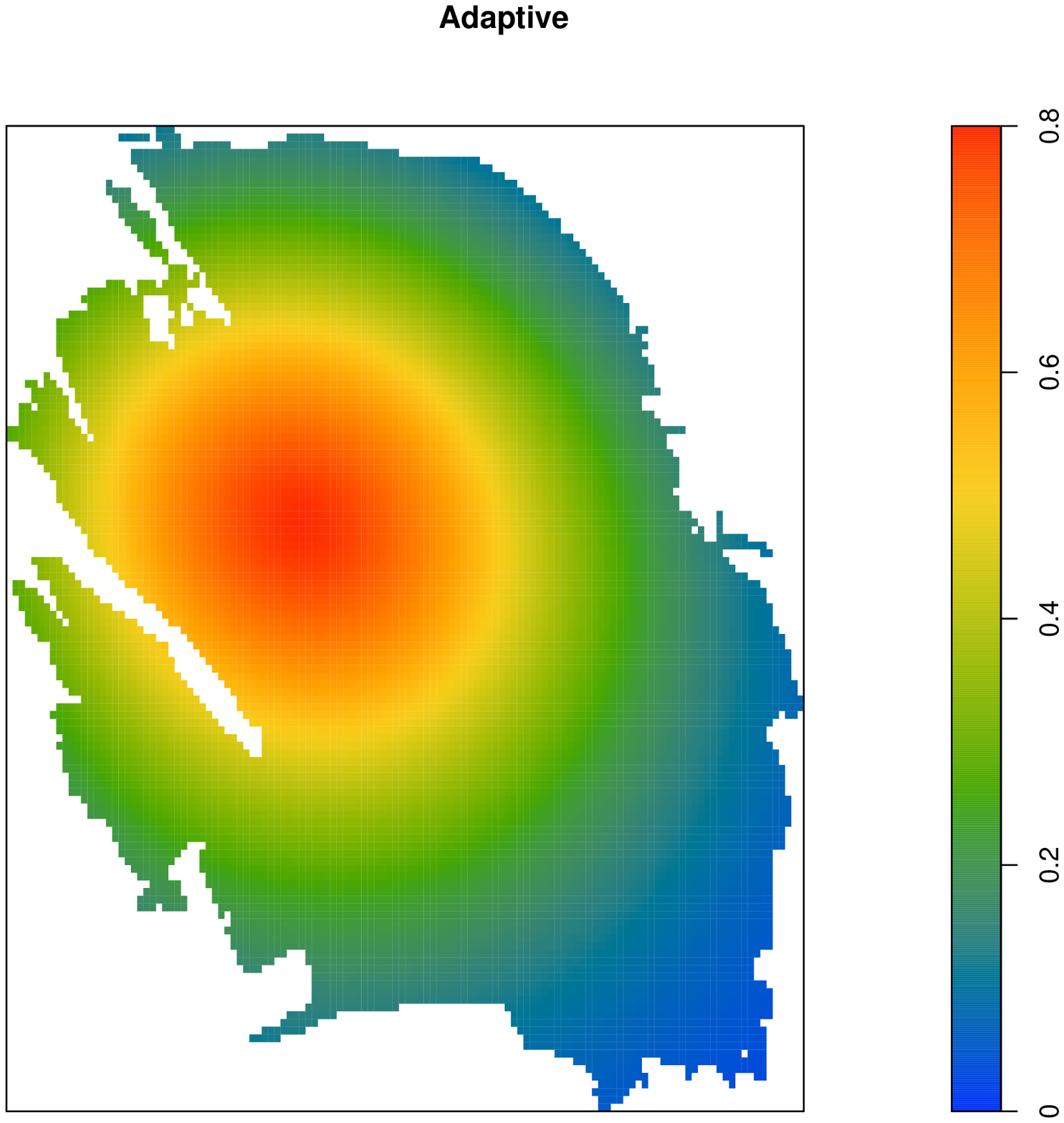}
\caption{Map of earthquakes of magnitude $M\geq 1.5$ that occurred during the
time period 1995--2021 in the Groningen gas field (left-most panel). Kernel 
estimates of the intensity function using a Gaussian kernel with local edge
correction and bandwidth selected by the Cronie--Van Lieshout algorithm (middle 
panel) and by Algorithm~\ref{A:localCvL} (right-most panel).}
\label{F:fig1}
\end{figure}

In 1959, a large gas field was discovered in Groningen, a province in
the north of The Netherlands. Initially, the benefits from the sale
of gas were a boon to the Dutch economy. However, from the 1990s 
earthquakes were being registered in the previously tectonically
inactive Groningen region. The pattern of induced earthquakes 
of magnitude $1.5$ and larger during the period 1995--2021 is depicted
in the left-most panel in Figure~\ref{F:fig1}. Note that most earthquakes 
occurred in the central and western regions.  

We applied Algorithm~\ref{A:localCvL} to produce a map of the spatially varying 
intensity function using a Gaussian kernel and local edge correction. The result 
is shown in the right-most panel in Figure~\ref{F:fig1}. For comparison, the
middle panel shows the estimated kernel estimator upon applying the Cronie--Van
Lieshout bandwidth selection algorithm. We conclude that an adaptive approach 
leads to higher estimated risks in the central gas field, balanced by a lower 
estimated risk in the periphery.

%\begin{verbatim}
%see ZhuldyzSpace; the previous version cannot deal with
%irregular windows and gives wrong results 
%
% X <- eqsG
%Hamilton2D( X, edge=FALSE, diggle=FALSE, 0.1, 10, 500)
%8.19
%
%
%AdaptiveHamilton2D( X, edge=FALSE, diggle=TRUE, lower=0.1, upper=10,
%  nh=500, bwPilot=8.2)
%8.15
%\end{verbatim}

\section{Conclusion}

In this article, we introduced a completely non-parametric two-step 
algorithm for adaptive bandwidth selection for kernel estimators of 
the spatial intensity function and proved its validity. Simulations 
showed that for patterns with strong contrasts in point densities, 
the adaptive kernel estimator outperforms the classic kernel estimator
in terms of integrated squared error. 

We also demonstrated the feasibility of the proposed algorithm in
practice. Given a pilot estimator, the numerical complexity of step 
{\bf{[2.a]}} in Algorithm~\ref{A:localCvL} is of the same magnitude as 
that of step {\bf{[1.a]}}. Indeed, for a pattern with $n$ points, the 
calculation of $\widehat \lambda_A$ requires $n$ function evaluations per 
point. Therefore, calculation of the criterion function $F_\kappa$ is 
quadratic in $n$. Discretising the range of bandwidth values into $n_h$ 
steps, the total computational load is therefore of the order $n_h n^2$. 

Step {\bf{[2.b]}} may be computationally demanding, though. For an 
$M\times N$ grid, equation~(\ref{e:Abramson}) requires $n M N$ 
evaluations of the kernel, which may be problematic for large patterns
and fine grids. The numerical complexity of calculating the edge correction 
weights is dependent on the type of edge correction chosen ($n M N$ for local,
$M^2 N^2$ for global edge correction). When direct computation is impossible, 
fast approximation techniques exist \citep{DaviBadd18}. Note that for a global 
bandwidth, direct calculation can be avoided because (\ref{e:kernel}) may
be written as a convolution and fast Fourier techniques apply.

Finally, in this article we only considered isotropic kernels. In future, 
we plan to study adaptive kernel estimators with different bandwidths for the 
various components.

\section*{Acknowledgements}
This research was supported by The Netherlands Organisation
for Scientific Research NWO (project DEEP.NL.2018.033).


\begin{thebibliography}{7}
\expandafter\ifx\csname natexlab\endcsname\relax\def\natexlab#1{#1}\fi

\bibitem[{Abramson(1982)}]{Abra82a}
\textsc{Abramson, I.A.} (1982).
\newblock On bandwidth variation in kernel estimates -- A square root law.
\newblock \textit{The Annals of Statistics} \textbf{10}, 1217--1223.

\bibitem[{Baddeley et~al.(2015)Baddeley, Rubak \& Turner}]{Badd16}
\textsc{Baddeley, A.}, \textsc{Rubak, E.} \& \textsc{Turner, R.} (2015).
\newblock \textit{Spatial Point Patterns: Methodology and Applications with
  {R}}.
\newblock Boca Raton: CRC Press.

\bibitem[{Barr \& Schoenberg(2010)}]{BarrScho10}
\textsc{Barr, C.D.} \& \textsc{Schoenberg, F.P.} (2010).
\newblock On the {V}oronoi estimator for the intensity of an inhomogeneous
  planar {P}oisson process.
\newblock \textit{Biometrika} \textbf{97}, 977--984.

\bibitem[{Berman \& Diggle(1989)}]{BermDigg89}
\textsc{Berman, M.} \& \textsc{Diggle, P.J.} (1989).
\newblock Estimating weighted integrals of the second-order intensity of a
  spatial point process.
\newblock \textit{Journal of the Royal Statistical Society: Series B
  (Statistical Methodology)} \textbf{51}, 81--92.

\bibitem[Brook \& Marron(1991)]{BrooMarr91}
\textsc{Brooks, M.M. \& Marron, J.S.} (1991).
\newblock Asymptotic optimality of the least-squares cross-validation
bandwidth for kernel estimates of intensity functions.
\newblock \textit{Stochastic Processes and their Applications} 
\textbf{38}, 157--165.

\bibitem[Chac\'on \& Duong(2018)]{ChacDuon18}
\textsc{Chac\'on, J.E. \& Duong, T.} (2018).
\newblock \textit{Kernel Smoothing and its Applications}.
Boca Raton: CRC Press.

\bibitem[{Chiu et~al.(2013)Chiu, Stoyan, Kendall \& Mecke}]{SKM}
\textsc{Chiu, S.N.}, \textsc{Stoyan, D.}, \textsc{Kendall, W.S.} \&
  \textsc{Mecke, J.} (2013).
\newblock \textit{Stochastic Geometry and its Applications}.
\newblock Chichester: Wiley, 3rd ed.

\bibitem[{Cronie \& Van Lieshout(2018)}]{CronLies18}
\textsc{Cronie, O.} \& \textsc{Lieshout, M.N.M.~van} (2018).
\newblock A non-model based approach to bandwidth selection for kernel
estimators of spatial intensity functions.
\newblock \textit{Biometrika} \textbf{105}, 455--462.

\bibitem[Davies \& Baddeley(2018)]{DaviBadd18}
\textsc{Davies, T.M.} \& \textsc{Baddeley, A.} (2018).
\newblock Fast computation of spatially adaptive kernel estimates.
\newblock \textit{Statistics and Computing} \textbf{28}, 937--956.

\bibitem[Davies et al.(2018)]{DaviFlynHaze18}
{\sc Davies, T.M., Flynn, C.R. \& Hazelton, M.L.} (2018).
On the utility of asymptotic bandwidth selectors for 
spatially adaptive kernel density estimation.
{\em Statistics and Probability Letters\/} {\bf 138}, 75--81.

\bibitem[{Diggle(1985)}]{Digg85}
\textsc{Diggle, P.J.} (1985).
\newblock A kernel method for smoothing point process data.
\newblock \textit{Applied Statistics} \textbf{34}, 138--147.

\bibitem[{Diggle(2014)}]{Digg14}
\textsc{Diggle, P.J.} (2014).
\newblock \textit{Statistical Analysis of Spatial and
Spatio-Temporal Point Patterns}.
\newblock Boca Raton: CRC Press, 3rd ed.

\bibitem[DuRietz(1929)]{DuRi29}
\textsc{Du Rietz, G.E.} (1929).
\newblock The fundamental units of vegetation.
\newblock \textit{Proceedings of the International Congress of Plant Science}
\textbf{1}, 623--627.

\bibitem[Hall et al.(1995)]{Halletal95}
\textsc{Hall, P.,  Hu, T.C. \& Marron, J.S.} (1995).
\newblock Improved variable window kernel estimates of probability densities.
\newblock \textit{Annals of Statistics} \textbf{23}, 1--10.

\bibitem[{Hall et~al.(2004)Hall, Minnotte \& Zhang}]{Hall04}
\textsc{Hall, P.}, \textsc{Minnotte, M.C.} \& \textsc{Zhang, C.} (2004).
\newblock Bump hunting with non-{G}aussian kernels.
\newblock \textit{The Annals of Statistics} \textbf{32}, 2124--2141.

\bibitem[{Illian et~al.(2008)Illian, Penttinen, Stoyan \& Stoyan}]{Illi08}
\textsc{Illian, J.}, \textsc{Penttinen, A.}, \textsc{Stoyan, H.} \&
  \textsc{Stoyan, D.} (2008).
\newblock \textit{Statistical Analysis and Modelling of Spatial Point
  Patterns}.
\newblock Chichester: Wiley.

\bibitem[{Van Lieshout(2012)}]{Lies12}
\textsc{{Lieshout, M.N.M.~van}} (2012).
\newblock On estimation of the intensity function of a point process.
\newblock \textit{Methodology and Computing in Applied Probability}
  \textbf{14}, 567--578.

\bibitem[Van Lieshout(2019)]{Lies19}
\textsc{ Lieshout, M.N.M.~van} (2019). 
\newblock \textit{Theory of Spatial Statistics: A Concise Introduction}.
\newblock Boca Raton: CRC Press.

\bibitem[Van Lieshout(2020)]{Lies20}
\textsc{Lieshout, M.N.M.~van} (2020).
\newblock Infill asymptotics and bandwidth selection for kernel estimators
of spatial intensity functions.
\newblock \textit{Methodology and Computing in Applied Probability}
  \textbf{22}, 995--1008. 

\bibitem[{Van Lieshout(2021)}]{Lies21}
\textsc{{Lieshout, M.N.M.~van}} (2021).
\newblock  Infill asymptotics for adaptive kernel estimators of spatial 
intensity functions.
\newblock \textit{Australian and New Zealand Journal of Statistics}
  \textbf{63}, 159--181.

\bibitem[Lo(2017)]{Lo17}
\textsc{Lo, P.H.} (2017).
\newblock \textit{An Iterative Plug-in Algorithm for Optimal Bandwidth 
Selection in Kernel Intensity Estimation for Spatial Data}.
Technical University of Kaiserslautern: PhD Thesis.

\bibitem[{Loader(1999)}]{Load99}
\textsc{Loader, C.} (1999).
\newblock \textit{Local Regression and Likelihood}.
\newblock New York: Springer.

\bibitem[Matern(1986)]{Mate86}
\textsc{Mat\'ern, B.} (1986).
\newblock \textit{Spatial variation}.
\newblock Berlin: Springer.

\bibitem[{Ord(1978)}]{Ord78}
\textsc{Ord, J.K.} (1978).
\newblock How many trees in a forest?
\newblock \textit{Mathematical Sciences} \textbf{3}, 23--33.

\bibitem[{Schaap \& Van de Weygaert(2000)}]{SchaWeyg00}
\textsc{Schaap, W.E.} \& \textsc{Weygaert, R.~van de} (2000).
\newblock Letter to the editor. {C}ontinuous fields and discrete samples:
  {R}econstruction through {D}elaunay tessellations.
\newblock \textit{Astronomy and Astrophysics} \textbf{363}, L29--L32.

\bibitem[{Scott(1992)}]{Scott92}
\textsc{Scott, D.W.} (1992).
\newblock \textit{Multivariate Density Estimation: Theory, Practice and
  Visualization}.
\newblock New York: Wiley.

\bibitem[{Silverman(1986)}]{Silv86}
\textsc{Silverman, B.W.} (1986).
\newblock \textit{Density Estimation for Statistics and Data Analysis}.
\newblock London: Chapman \& Hall.

\bibitem[{Stoyan \& Grabarnik(1991)}]{StoyGrab91}
\textsc{Stoyan, D.} \& \textsc{Grabarnik, P.} (1991).
\newblock Second-order characteristics for stochastic structures 
connected with Gibbs point processes. 
\newblock \textit{Mathematische Nachrichten} \textbf{151}, 95--100.

\bibitem[Wand \& Jones(1994)]{WandJone94}
\textsc{Wand, M.P. \& Jones, M.C.} (1994).
\newblock \textit{Kernel Smoothing}.
Boca Raton: Chapman \& Hall.

\end{thebibliography}
\end{document}